\newtheorem{lemma}{Lemma}
\theoremstyle{definition}
\newtheorem{definition}{Definition}
\newtheorem{property}{Property}
\def\BState{\State\hskip-\ALG@thistlm}
\begin{document}
%
\title{Generalized Analysis of Convergence of Absolute Trust in Peer to Peer Networks }
%
%
%

\author{Sateesh Kumar Awasthi~and~Yatindra~Nath~Singh, \IEEEmembership{Senior~Member,~IEEE}
\thanks{Sateesh Kumar Awasthi  is with the Department
of Electrical Engineering, Indian Institute of Technology, Kanpur, India, e-mail: (sateesh@iitk.ac.in)}
\thanks{Yatindra~Nath~Singh  is with the Department
of Electrical Engineering, Indian Institute of Technology, Kanpur, India, e-mail: (ynsingh@iitk.ac.in)}}
\maketitle

\begin{abstract}
Open and anonymous nature of peer to peer networks provides an opportunity to malicious peers to behave unpredictably in the network. This leads the lack of trust among the peers. To control the behavior of peers in the network, reputation system can be used. In a reputation system, aggregation of trust is a primary issue. Algorithm for aggregation of trust should be designed such that, it can converge to a certain finite value. Absolute Trust is one of the algorithm, which is used for the aggregation of trust in peer to peer networks. In this letter, we present the generalized analysis of convergence of the Absolute Trust algorithm.
\end{abstract}

\begin{IEEEkeywords}
Non-negative matrix, Eigenvector, Matrix Norm.
\end{IEEEkeywords}

%
\IEEEpeerreviewmaketitle

\section{Introduction}
%
%
%
%
\IEEEPARstart{R}{eputation} systems have been proposed by many authors in the recent past \cite{abs}, \cite{eigen}, \cite{peertrust}, \cite{fuzzy}, \cite{power}, to prevent the attacks by rogue peers. Absolute Trust \cite{abs} is one such model. This model can characterize the past behavior of peers in the network. It can be implemented as a truly distributed system. In this model, peers evaluate each other locally, and the local trust for each other is aggregated in the whole network. The aggregated trust is called global trust. The global trust is evaluated recursively.
\par In recursive solution of any equation, error in each iteration must reduce and for large number of iteration it should tend to zero. This will guarantee the uniqueness of the solution. It was shown in \cite{abs} that if error in global trust is less compare to the actual solution, then it will converge to zero. But analysis for large error was not presented. In this letter, we will show that in any step, if error is very large compared to the actual solution then it will converge much faster in that step.   


\section{peer to peer model and absolute trust }
Let there be $N$ peers in a peer to peer network. In this network, peer $i$ can  be evaluated by peer $j$ based on service provided by peer $i$ in the past. Evaluated value $T_{ji}$ can be represented by a number from one  to ten. One is for worst service and ten is for best service. If there is no interaction between peers, $T_{ji}$ will be zero. $T_{ji}$ is called local trust of peer $i$ evaluated by  peer $j$. All  local trust values evaluated by various nodes can be aggregated in the whole network. Aggregated global trust of peer $i$ can be  given by \cite{abs}
\begin{equation}\label{eq1}
t_i=\Bigg[\Bigg(\frac{\sum_{j\in S_i}{T_{ji}t_j}}{\sum_{j\in S_i}{t_j}}\Bigg)^p . \Bigg(\frac{\sum_{j\in S_i}{t_{j}^2}}{\sum_{j\in S_i}{t_j}}\Bigg)^q\Bigg]^\frac{1}{(p+q)}.
 \end{equation}
 Here, $S_i$ is the set of peers evaluating the service of peer $i$, $t_j$ is global trust of peer $j$, and $p,q$ are suitably chosen constants. Equation \ref{eq1} can be rearranged as follows (see \cite{abs}).
 \[t_i=\Bigg[\bigg(\frac{\mathbf{e_iT^t t}}{\mathbf{e_iCt}}\bigg)^p\bigg(\frac{ \mathbf{e_iC. diag(t).t}}{\mathbf{e_iCt}}\bigg)^q \Bigg]^\frac{1}{p+q}\]

\[=\Bigg[\bigg(\frac{\mathbf{e_iT^t t}}{\mathbf{e_iCt}}\bigg)^\frac{1}{1+\alpha}\bigg(\frac{ \mathbf{e_iC. diag(t).t}}{\mathbf{e_iCt}}\bigg)^\frac{\alpha}{1+\alpha} \Bigg]\]

Here, $\mathbf{t}$ is global trust vector. Its $i^{th}$ entry is a global trust of peer $i$. $\mathbf{diag(t)}$ is a diagonal matrix with its $ii$ entry as $t_i$ and other entry as zero. $\mathbf{T}$ is trust matrix with its element $T_{ij}$ as a local trust of peer $j$ evaluated by peer $i$. $\mathbf{T^t}$ is transpose of matrix $\mathbf{T}$. $\mathbf{C}$ is incidence matrix corresponding to matrix $\mathbf{T^t}$ i.e. $C_{ij}=1$ if $T_{ji}>0$ otherwise $C_{ij}=0$. $\mathbf{e_i}$ is the row vector with its $i^{th}$ entry as '1' and all the other entry as zero. Here, $\alpha =q/p.$


\section{Analysis of convergence of Absolute trust}
\subsection{Center Point of the Matrix}
\begin{definition}\label{def1}
Center point of non-negative matrix $\mathbf{T^t}$ can be defined as the column vector $\mathbf{t}$. Where, its $i^{th}$ element $t_i$ will be $(\mathbf{e_iT^t t}/\mathbf{e_iCt})$. 
\end{definition}
\begin{definition}\label{def2}
A non-negative matrix $\mathbf{M}$ is said to be mutually exclusive with a non-negative matrix $\mathbf{N}$, if $M_{ij}>0$ implies $N_{ij}=0$. It also implies that when $N_{ij}>0$ then $M_{ij}=0$
\end{definition}
\begin{lemma}\label{lemma1}
Center point of any non-negative, irreducible matrix $\mathbf{T^t}$ is unique and can be calculated by an iterative function 
\[\mathbf{t^k}=\phi_1(\mathbf{t^{k-1}})=[\mathbf{diag}(d_1, d_2....d_N)]^{-1}.\mathbf{T^t. t^{k-1}}\] where $d_i=\mathbf{e_iCt}$
\end{lemma}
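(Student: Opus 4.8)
The plan is to read Definition~\ref{def1} as a fixed-point condition and identify the center point with the unique fixed point of the stated map $\phi_1$. Writing the defining relation coordinatewise, $t_i\,(\mathbf{e_iCt})=\mathbf{e_iT^t t}$ for every $i$, is exactly $t_i d_i=(\mathbf{T^t t})_i$ with $d_i=\mathbf{e_iCt}$, i.e. $\mathbf{t}=[\mathbf{diag}(d_1,\dots,d_N)]^{-1}\mathbf{T^t t}=\phi_1(\mathbf{t})$. Thus a center point is precisely a fixed point of $\phi_1$, and it suffices to prove that $\phi_1$ has exactly one fixed point and that its iterates converge to it from any admissible start.

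The first key step is the averaging structure of $\phi_1$. Since $C_{ij}=1$ exactly when $(\mathbf{T^t})_{ij}=T_{ji}>0$, each coordinate
\[
\phi_1(\mathbf{t})_i=\frac{\sum_{j\in S_i}T_{ji}\,t_j}{\sum_{j\in S_i}t_j}=\sum_{j\in S_i}w_{ij}(\mathbf{t})\,T_{ji},\qquad w_{ij}(\mathbf{t})=\frac{t_j}{\sum_{k\in S_i}t_k},
\]
is a convex combination of the local trust values $\{T_{ji}\}_{j\in S_i}$, because the weights are non-negative and sum to one. Irreducibility guarantees every $S_i\neq\varnothing$, so each denominator $d_i$ is strictly positive and $\phi_1$ is continuous on the positive orthant. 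Consequently the image of any strictly positive $\mathbf{t}$ lies in the box $B=[T_{\min},T_{\max}]^N$, where $T_{\min},T_{\max}$ are the smallest and largest trust values, and in particular $\phi_1(B)\subseteq B$. Existence of a center point then follows from Brouwer's theorem applied to the continuous self-map $\phi_1$ of the compact convex set $B$. Note also that $\phi_1(\lambda\mathbf{t})=\phi_1(\mathbf{t})$ for $\lambda>0$: the map is homogeneous of degree zero, so the scale of the iterate is irrelevant and is discarded after one step.

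For uniqueness and convergence I would show $\phi_1$ contracts toward its fixed point. The leverage is again the averaging form: changing the input from $\mathbf{t}$ to $\mathbf{s}$ only perturbs the weights, and since $\sum_j(w_{ij}(\mathbf{t})-w_{ij}(\mathbf{s}))=0$ one may recenter the averaged values, writing $\phi_1(\mathbf{t})_i-\phi_1(\mathbf{s})_i=\sum_{j\in S_i}(w_{ij}(\mathbf{t})-w_{ij}(\mathbf{s}))(T_{ji}-c_i)$ for any $c_i$. Equivalently, the Jacobian at a point $\mathbf{t}$ has entries $J_{il}=C_{il}(T_{li}-\phi_1(\mathbf{t})_i)/d_i$, and Euler's relation for degree-zero homogeneity gives $\sum_l t_l\,J_{il}=0$, i.e. $J\mathbf{t}=0$. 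Hence the scaling direction is annihilated, and convergence of $\mathbf{t}^k=\phi_1(\mathbf{t}^{k-1})$ reduces to showing that the spectral radius of $J$, restricted to the complement of the fixed-point direction, is strictly below one.

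The hard part will be precisely this contraction estimate, and it is genuinely delicate: the naive induced $\ell^\infty$ bound $\|J\|_\infty=\max_i\, d_i^{-1}\sum_{j\in S_i}|T_{ji}-\phi_1(\mathbf{t})_i|$ can exceed one when a peer has several evaluators giving widely spread trust scores while its neighbours carry small weights. So one cannot simply declare $\phi_1$ a sup-norm contraction; the argument must exploit the zero row-weighted-sum structure $J\mathbf{t}=0$ together with irreducibility, measuring distances in a weighted norm (weights of order $1/t_i$) or in the Hilbert projective metric, in which the composition ``average then renormalize'' becomes a strict Birkhoff contraction once strong connectivity mixes the local averages across the whole network over finitely many steps. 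This is also where the paper's stated tools---eigenvector and matrix norm---enter, the spectral-radius/subordinate-norm bound on $J$ being the computational heart. Granting that uniform factor $k<1$, the Banach fixed-point theorem delivers both uniqueness of the center point and convergence of the iteration to it from any strictly positive initialization.
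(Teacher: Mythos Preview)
Your linearisation is in fact the \emph{same} object as the paper's: at a fixed point your Jacobian $J_{il}=C_{il}(T_{li}-t_i)/d_i$ is exactly the matrix the authors write as $\mathbf{A}-\mathbf{B}$, with $A_{il}=T_{li}/d_i$ and $B_{il}=t_iC_{il}/d_i$. Where the two arguments diverge is in how to force $\rho(J)<1$. The paper does not attempt a norm or Hilbert-metric estimate at all; its key observation is the \emph{splitting} $J=\mathbf{A}-\mathbf{B}$ into two non-negative irreducible matrices that share the same positive Perron eigenvector $\mathbf{t}$ (so $\rho(\mathbf{A})=\rho(\mathbf{B})=1$), and then it invokes an external result (Theorem~1 of \cite{abs}) asserting that for such a pair $(\mathbf{A}-\mathbf{B})^k\to 0$. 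That single structural fact replaces the whole contraction analysis you are contemplating.

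Your route via Brouwer for existence is cleaner than anything in the paper (the authors simply posit an ``actual solution'' $\mathbf{t}$ and track $\delta\mathbf{t}^k$), and your degree-zero homogeneity remark, which confines all iterates to $[T_{\min},T_{\max}]^N$ after one step, is a nicer way to handle large initial data than the paper's ad~hoc discussion of the scalar factor $f_i(\delta\mathbf{t}^{k-1})=1/(1+\mathbf{e_iC}\,\delta\mathbf{t}^{k-1}/\mathbf{e_iC}\,\mathbf{t})$. But your proposed closing move---Birkhoff contraction in the Hilbert projective metric---is shaky here: $\phi_1$ is not order-preserving (increasing one $t_j$ can move the weighted average either way), so the standard Birkhoff theorem does not apply to it directly, and you would need a genuinely different argument. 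Since you already have $J$ and $J\mathbf{t}=0$, the shortest path to a complete proof is to notice the $\mathbf{A}-\mathbf{B}$ decomposition above and argue spectrally from that, exactly as the paper does.
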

\begin{proof}
$i^{th}$ element of iterative function $\phi_1(\mathbf{t^{k-1}})$ is
\begin{equation}\label{equ2}
t_i^{k}=\frac{(\mathbf{e_iT^t .t^{k-1}})}{(\mathbf{e_iC.t^{k-1}})}
\end{equation}

Let $t_i^k$ and $t_i^{k-1}$ are, far from actual solution $t_i$ by  $\delta t_i^k$ and $\delta t_i^{k-1}$ respectively, then  

\[t_i + \delta t_i^{k}=  \Bigg[\frac{\mathbf{\big(e_i T^t.(t + \delta t^{k-1})\big)}}{\mathbf{\big(e_i C.(t + \delta t^{k-1})\big)}}\Bigg]\]

\[=\Bigg[\frac{(\mathbf{e_iT^t .t})}{(\mathbf{e_iC.t})}\Bigg]\Bigg[\frac{(1+\frac{\mathbf{e_i.T^t.\delta t^{k-1}}}{\mathbf{e_i.T^t.t}})}{(1+\mathbf{\frac{e_i.C.\delta t^{k-1}}{e_i.C.t})}}\Bigg].\]
From definition \ref{def1}, $(\mathbf{e_iT^t t}/\mathbf{e_iCt})$ is $i^{th}$ element of center point of matrix $\mathbf{T^t}$. So we can write,
\[t_i + \delta t_i^{k}=t_i.\Bigg[\frac{(1+\frac{\mathbf{e_i.T^t.\delta t^{k-1}}}{\mathbf{e_i.T^t.t}})}{(1+\mathbf{\frac{e_i.C.\delta t^{k-1}}{e_i.C.t})}}\Bigg]\]

\[\delta t_i^{k}=t_i.\Bigg[\frac{(1+\frac{\mathbf{e_i.T^t.\delta t^{k-1}}}{\mathbf{e_i.T^t.t}})}{(1+\mathbf{\frac{e_i.C.\delta t^{k-1}}{e_i.C.t})}}-1\Bigg]\]

\[\delta t_i^{k}=\frac{t_i}{(1+\mathbf{\frac{e_i.C.\delta t^{k-1}}{e_i.C.t})}}.\Bigg[\frac{\mathbf{e_i.T^t.\delta t^{k-1}}}{\mathbf{e_i.T^t.t}}-\mathbf{\frac{e_i.C.\delta t^{k-1}}{e_i.C.t}}\Bigg]\]

\[=\frac{1}{(1+\mathbf{\frac{e_i.C.\delta t^{k-1}}{e_i.C.t})}}.\Bigg[\frac{t_i\mathbf{e_i.T^t}}{\mathbf{e_i.T^t.t}}-\frac{t_i\mathbf{e_i.C.}}{\mathbf{e_i.C.t}}\Bigg].\mathbf{\delta t^{k-1}}\]

\[=\frac{1}{(1+\mathbf{\frac{e_i.C.\delta t^{k-1}}{e_i.C.t})}}.\bigg[\mathbf{A_i}-\mathbf{B_i}\bigg].\mathbf{\delta t^{k-1}}\]

\[=f_i(\mathbf{\delta t^{k-1}}).\bigg[\mathbf{A_i}-\mathbf{B_i}\bigg].\mathbf{\delta t^{k-1}}\]
where $\mathbf{A_i}$ and $\mathbf{B_i}$ are $i^{th}$ row of $NXN$ matrices $\mathbf{A}$ and $\mathbf{B}$ respectively. It can be  observed easily that \[\mathbf{A.t}=\mathbf{t}\] and \[\mathbf{B.t}=\mathbf{t}.\]
 Matrices $ \mathbf{A,B} $ have non zero elements  at same position as matrix $\mathbf{T^t}$, hence $ \mathbf{A,B} $  are also irreducible. Therefore spectral radius of $\mathbf{A , B}$  will be '1'(see\cite{nonnegative}).
 \par Now \[f_i(\mathbf{\delta t^{k-1}})=\frac{1}{(1+\mathbf{\frac{e_i.C.\delta t^{k-1}}{e_i.C.t})}}\]
 If $\mathbf{\delta t^{k-1}} << \mathbf{t}$ then $f_i(\mathbf{\delta t^{k-1}})\approx 1$ \par Hence, 
\[\mathbf{\delta t^k}=\big[\mathbf{A}-\mathbf{B}\big].\mathbf{\delta t^{k-1}}\]
\[lim_{k\to\infty}\mathbf{\delta t^k} =lim_{k\to\infty}[\mathbf{A-B}]^k\mathbf{\delta t^0}=\mathbf{0}\] (see Theorem 1 in \cite{abs}) here $\mathbf{\delta t^0}$ is initial error in $\mathbf{t}$ \par Now, for the case when $\mathbf{\delta t^{k-1}} > \mathbf{t}$ then $f_i(\mathbf{\delta t^{k-1}})< 1$, in each step, $\delta t_i^k$ will decrease more rapidly. If $\mathbf{\delta t^{k-1}} \approx \mathbf{t}$ then $f_i(\mathbf{\delta t^{k-1}}) \approx 1/2$, in this case each $\delta t_i^k$ will reduced to half in $k^{th}$ step.\par Hence we can conclude that center point of any non negative, irreducible matrix can be calculated by above iterative function. Error in each step will depend upon the error in past step. Error will reduce very fast, if it is far from actual solution in an step and after large iterations it will go to zero.
\end{proof}

\begin{lemma}\label{lemma2}
If  vector $\mathbf{t}$ is the center point of matrix $\mathbf{C.diag(t)}$, then center point will lie on the vector $\mathbf{e}$. It can be calculated by iterative function 
\[\mathbf{t^k}=\phi_2(\mathbf{t^{k-1}})=[\mathbf{diag}(d_1, d_2....d_N)]^{-1}.\mathbf{C.diag(t^{k-1}).t^{k-1}}\] where $\mathbf{e}$ is a vector with each element as '1' and $d_i=\mathbf{e_iCt}.$
\end{lemma}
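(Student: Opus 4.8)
The plan is to mirror the structure of the proof of Lemma \ref{lemma1}, while tracking one essential difference that the square in the numerator introduces. First I would specialize Definition \ref{def1} to the matrix $\mathbf{M}=\mathbf{C.diag(t)}$. Since every $t_j>0$, the incidence matrix of $\mathbf{C.diag(t)}$ coincides with $\mathbf{C}$ itself, so that $d_i=\mathbf{e_iCt}$ is the correct normalization and the center point condition becomes
\[t_i=\frac{\mathbf{e_iC.diag(t).t}}{\mathbf{e_iCt}}=\frac{\sum_{j\in S_i}t_j^2}{\sum_{j\in S_i}t_j}.\]
I would then show this forces $\mathbf{t}$ to lie on $\mathbf{e}$. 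Substituting a constant vector $\mathbf{t}=c\mathbf{e}$ gives $(\sum_{j\in S_i}t_j^2)/(\sum_{j\in S_i}t_j)=c^2|S_i|/(c|S_i|)=c=t_i$, so every multiple of $\mathbf{e}$ is a fixed point. For uniqueness up to scale I would observe that the right-hand side is a weighted average of $\{t_j:j\in S_i\}$ with positive weights $t_j$, hence lies between $\min_{j\in S_i}t_j$ and $\max_{j\in S_i}t_j$; a maximum-principle argument together with irreducibility of $\mathbf{C}$ then propagates the maximal coordinate value to all coordinates, so the only solutions are constant vectors.

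For the iterative part I would repeat the perturbation computation of Lemma \ref{lemma1}. Writing $t_i^k=t_i+\delta t_i^k$ and linearizing $t_i^{k}=(\sum_jC_{ij}(t_j^{k-1})^2)/(\sum_jC_{ij}t_j^{k-1})$ in $\mathbf{\delta t^{k-1}}$ yields
\[\delta t_i^k=f_i(\mathbf{\delta t^{k-1}}).\big[\mathbf{A_i}-\mathbf{B_i}\big].\mathbf{\delta t^{k-1}},\]
with the same $f_i(\mathbf{\delta t^{k-1}})=1/(1+\mathbf{e_iC.\delta t^{k-1}}/\mathbf{e_iCt})$ as before, where now $\mathbf{A_i}=2t_i\mathbf{e_iC.diag(t)}/(\mathbf{e_iC.diag(t).t})$ and $\mathbf{B_i}=t_i\mathbf{e_iC}/(\mathbf{e_iCt})$ are the $i^{th}$ rows of matrices $\mathbf{A},\mathbf{B}$ sharing the zero pattern of $\mathbf{C}$, hence irreducible.

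The hard part, and the genuine difference from Lemma \ref{lemma1}, is the spectral behaviour of $\mathbf{A}-\mathbf{B}$. A direct computation gives $\mathbf{At}=2\mathbf{t}$ and $\mathbf{Bt}=\mathbf{t}$, so that $(\mathbf{A}-\mathbf{B})\mathbf{t}=\mathbf{t}$: the factor two coming from differentiating the square means $\mathbf{A}-\mathbf{B}$ no longer annihilates $\mathbf{t}$, as it did in Lemma \ref{lemma1}, but fixes it. Evaluated at the fixed point $\mathbf{t}=c\mathbf{e}$, the matrix $\mathbf{A}-\mathbf{B}$ reduces to $\mathbf{C}$ with each row divided by its row sum $|S_i|$, i.e. a row-stochastic irreducible matrix with Perron eigenvalue $1$ and Perron eigenvector $\mathbf{e}$. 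Consequently the error need not vanish, and one cannot simply invoke the ``powers tend to zero'' statement of Theorem 1 in \cite{abs} used for Lemma \ref{lemma1}. Instead I would argue that the component of $\mathbf{\delta t^k}$ transverse to $\mathbf{e}$ decays while the component along $\mathbf{e}$ only fixes the scale, so that $\mathbf{t^k}$ converges to a multiple of $\mathbf{e}$, which is precisely the center point. I expect this eigenvalue-one behaviour to be the main obstacle, and the cleanest way to close it is to abandon the linearized spectral picture at the last step and instead use the weighted-average contraction directly: since each $t_i^k$ lies strictly between the minimum and maximum of its neighbours, the spread $\max_i t_i^k-\min_i t_i^k$ is non-increasing and, by irreducibility, strictly contracts, forcing convergence to a constant vector without any aperiodicity assumption.
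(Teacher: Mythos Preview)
Your route is considerably more elaborate than the paper's and genuinely different. The paper does not linearize around a fixed point at all: it simply writes the iteration as $\mathbf{t^k}=\mathbf{M(k-1)\,t^{k-1}}$, where the $i^{th}$ row of $\mathbf{M(k-1)}$ is $\mathbf{e_iC.diag(t^{k-1})}/(\mathbf{e_iC.t^{k-1}})$. Each such row sums to one, so every $\mathbf{M(k)}$ is row-stochastic, non-negative and irreducible, with Perron eigenvalue $1$ and Perron vector $\mathbf{e}$; the paper then invokes Lemma~2 of \cite{abs} on infinite products of such matrices to conclude $\mathbf{t^k}\to\mathbf{e}$ in two lines. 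Your perturbation calculation eventually rediscovers this row-stochastic structure (your $\mathbf{A}-\mathbf{B}$ evaluated at $\mathbf{t}=c\mathbf{e}$ is exactly the row-normalized $\mathbf{C}$), but the detour through $\mathbf{A},\mathbf{B}$ and the eigenvalue-one obstruction is unnecessary once one notices that $\phi_2$ is \emph{already} multiplication by a stochastic matrix. What your approach buys is self-containment: you do not appeal to an external product-of-matrices lemma, and your weighted-average reading of $\sum t_j^2/\sum t_j$ makes the fixed-point set $\{c\mathbf{e}\}$ transparent, which the paper leaves implicit.

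There is, however, a real gap in your closing step. Irreducibility of $\mathbf{C}$ alone does \emph{not} force the spread $\max_i t_i^k-\min_i t_i^k$ to strictly contract. With $\mathbf{C}=\bigl(\begin{smallmatrix}0&1\\1&0\end{smallmatrix}\bigr)$ and $\mathbf{t^0}=(a,b)^t$, $a\ne b$, one gets $t_1^k=t_2^{k-1}$ and $t_2^k=t_1^{k-1}$, so the iterates swap forever and never approach a constant vector. Strict contraction of the oscillation requires primitivity (aperiodicity) of $\mathbf{C}$, or an equivalent hypothesis; your clause ``without any aperiodicity assumption'' is therefore wrong as stated. The paper sidesteps this issue by delegating convergence entirely to the cited Lemma~2 in \cite{abs}, so whatever structural hypotheses that lemma carries are implicitly in force here as well.
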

\begin{proof}
Iterative function can be written as 
\[\mathbf{t^k=M(k-1).t^{k-1}}\]
\[=\mathbf{M(k-1).M(k-2)....M(0).t^0}\]
where $i^{th}$ row of matrix $\mathbf{M(k)}$ is  $\big(\mathbf{\frac{e_iC.diag(t^k)}{e_iC.t^k}}\big)$. We can easily prove that for matrix $\mathbf{M(k)}$, sum of its each row is one. Hence it has '1' as an  eigen value. The corresponding eigen vector will be $\mathbf{e}$. For positive initial guess of $\mathbf{t^0}$, all matrices $\mathbf{M(k)}$ will be non-negative and   irreducible. So we can conclude that spectral radius of all $\mathbf{M(k)}$ is '1'. Hence (see Lemma 2 in \cite{abs})
\[lim_{k\to\infty}\mathbf{t^k} =lim_{k\to\infty}\mathbf{M(k-1).M(k-2)....M(0).t^0=e}\]
\[lim_{k\to\infty}\mathbf{t^k} =lim_{k\to\infty}\phi_2(\mathbf{t^{k-1}})=\mathbf{e}\]
\end{proof}
\subsection{Properties of Center Point}
\begin{property}\label{prop1}
If center point of matrix $\mathbf{M}$ is $\mathbf{t}$ then center point of matrix $k\mathbf{M}$ will be $k\mathbf{t}$. Where $k$ is any arbitrary scalar.

\end{property}
\begin{proof}
Let $t_i$ is $i^{th}$ element of $\mathbf{t}$ then
\[t_i =\mathbf{\frac{e_i.M.t}{e_i.C.t}}\]
or
\[kt_i =\frac{\mathbf{e_i}.k\mathbf{M}.\mathbf{t}}{\mathbf{e_i.C.t}}\]
or
\[(kt_i) =\frac{\mathbf{e_i}.(k\mathbf{M}).(k\mathbf{t})}{\mathbf{e_i.C}.(k\mathbf{t})}\] hence $k\mathbf{t}$ is center point of matrix $k\mathbf{M}$
\end{proof}
\begin{property}\label{prop2}
If all entries of the $NXN$ matrix $\mathbf{M}$ are positive($M_{ij}>0 \hspace{2mm}\forall i,j$ ) then center point will lie on the principle eigen vector of matrix.
\end{property}
\begin{proof}
If all the entry of matrix $\mathbf{M}$ are positive then
\[\mathbf{e_i.C.t} = \sum_{j=1}^{N}t_j=\lambda \hspace{5mm}  \forall i\] Here $\lambda$ is a constant. Further,
\[t_i =\mathbf{\frac{e_i.M.t}{e_i.C.t}}=\frac{\mathbf{e_i.M.t}}{\lambda}.\]
Hence $\mathbf{Mt}=\lambda\mathbf{t}$. Hence $\lambda$ will be spectral radius and $\mathbf{t}$ will be principle eigen vector of matrix $\mathbf{M}$ (see\cite{nonnegative})
\end{proof}

\begin{property}\label{prop3}
If center point of $m$ non-negative, irreducible and \textbf{mutually exclusive} matrices  $\mathbf{M_1,M_2....M_m}$  are same, then the center point of their sum will also be the  same.
\end{property}
\begin{proof}
Let the incidence matrix corresponding to matrices $\mathbf{M_1,M_2....M_m}$ are $\mathbf{C_1,C_2....C_m}$ respectively. Now if $\mathbf{M=M_1+M_2+....M_m}$ then incidence matrix corresponding to matrix $\mathbf{M}$ will be $\mathbf{C=C_1+C_2+....C_m}$ because matrices $\mathbf{M_1,M_2....M_m}$ are mutually exclusive. Let $i^{th}$ element of center point of  matrices $\mathbf{M_1,M_2....M_m}$ be $t_i$ then
\[t_i =\mathbf{\frac{e_i.M_j.t}{e_i.C_j.t}} \hspace{5mm} \forall j\]
or
\[ t_i\mathbf{e_i}.\mathbf{C_j.t}=\mathbf{e_i.M_j.t}\]
taking summation on both side w.r.t. $j$
\[\sum_{j=1}^{m}t_i\mathbf{e_i}.\mathbf{C_j.t}=\sum_{j=1}^{m} \mathbf{e_i.M_j.t}\]

\[t_i\mathbf{e_i}.\big(\sum_{j=1}^{m}\mathbf{C_j}\big).\mathbf{t}=\mathbf{e_i.\big(\sum_{j=1}^{m}M_j\big).t}\]

\[t_i\mathbf{e_i}.\mathbf{C.t}=\mathbf{e_i.M.t}\]
or
\[ t_i =\mathbf{\frac{e_i.M.t}{e_i.C.t}}\]
hence $\mathbf{t}$ is also the center point of matrix $\mathbf{M}.$ 

\end{proof}

\subsection{Convergence of Absolute Trust for Large Error in Initial Guess}
It was stated in \cite{abs} that global trust in equation \ref{eq1} can be calculated by iterative function 
\[\mathbf{t^{k}}  = \mathbf{\phi(t^{k-1})}=[\mathbf{diag}(d_1,d_2....d_N) \mathbf{T^t. t^{k-1}}]^\frac{1}{1+\alpha}\]
where $d_i = \bigg[\frac{{\big(\mathbf{e_i C. diag(t^{k-1}) .t^{k-1}}}\big)^\alpha} {{\big(\mathbf{e_i C t^{k-1}}}\big)^{(1+\alpha)}} \bigg].$ Proof was derived only for the small error that is, if initial guess of $\mathbf{t}$ is very close to the actual solution. However global trust can be calculated for any positive initial guess. In this subsection we will show that it will converge faster in any step if error is large compared to  $\mathbf{t}$.
\par
Let $t_i^k$ and $t_i^{k-1}$ are, far from actual solution $t_i$ by  $\delta t_i^k$ and $\delta t_i^{k-1}$ respectively, then  
\begin{multline*}
t_i + \delta t_i^{k}=  \Bigg[\frac{\mathbf{\big(e_i T^t.(t + \delta t^{k-1})\big)}}{\mathbf{\big(e_i C (t + \delta t^{k-1})\big)}}\Bigg]^\frac{1}{1+\alpha}.\\
\Bigg[\frac{\mathbf{\big(e_i C. diag(t + \delta t^{k-1}).(t + \delta t^{k-1})\big)}}{\mathbf{\big(e_i C (t + \delta t^{k-1})\big)}}\Bigg]^\frac{\alpha}{1+\alpha}
\end{multline*}

If error $\mathbf{\delta t^{k-1}> t}$ then we can approximate $\mathbf{t+\delta t^{k-1} \approx \delta t^{k-1} }$ hence
\[\delta t_i^{k}=  \Bigg[\frac{\mathbf{\big(e_i T^t.\delta t^{k-1}\big)}}{\mathbf{\big(e_i C.\delta t^{k-1}\big)}}\Bigg]^\frac{1}{1+\alpha}.
\Bigg[\frac{\mathbf{\big(e_i C. diag(\delta t^{k-1}).\delta t^{k-1}\big)}}{\mathbf{\big(e_i C.\delta t^{k-1}\big)}}\Bigg]^\frac{\alpha}{1+\alpha}\]
Using Young's Inequality \cite{inequality}, i.e.
\[c.d \leq \frac{1}{1+\alpha}c^{1+\alpha}+\frac{\alpha}{1+\alpha}d^{(1+\alpha)/\alpha};\]
and taking  
\[c=\Bigg[\frac{\mathbf{\big(e_i T^t.\delta t^{k-1}\big)}}{\mathbf{\big(e_i C.\delta t^{k-1}\big)}}\Bigg]^\frac{1}{1+\alpha}\] and
 \[d=\Bigg[\frac{\mathbf{\big(e_i C. diag(\delta t^{k-1}).\delta t^{k-1}\big)}}{\mathbf{\big(e_i C.\delta t^{k-1}\big)}}\Bigg]^\frac{\alpha}{1+\alpha},\]
We can write 
\begin{multline*}
\delta t_i^{k} \leq  \frac{1}{1+\alpha}\Bigg[\frac{\mathbf{\big(e_i T^t.\delta t^{k-1}\big)}}{\mathbf{\big(e_i C.\delta t^{k-1}\big)}}\Bigg]+\\
\frac{\alpha}{1+\alpha}\Bigg[\frac{\mathbf{\big(e_i C. diag(\delta t^{k-1}).\delta t^{k-1}\big)}}{\mathbf{\big(e_i C.\delta t^{k-1}\big)}}\Bigg].
\end{multline*}

\[\mathbf{\delta t^{k}} \leq \frac{1}{1+\alpha}\phi_1(\mathbf{\delta t^{k-1}}) + \frac{\alpha}{1+\alpha}\phi_2(\mathbf{\delta t^{k-1}}).\]
Here $\phi_1(.)$ and $\phi_2(.)$ are as defined in Lemma \ref{lemma1} and in Lemma \ref{lemma2} respectively.

\begin{equation}\label{equ3}
\mathbf{\delta t^{k}} \leq \frac{1}{1+\alpha}\mathbf{M^{'}_1.\delta t^{k-1}} + \frac{\alpha}{1+\alpha}\mathbf{M^{'}_2.\delta t^{k-1}}
\end{equation}

It is convex combination of iterative function $\phi_1$ and $\phi_2$. $i^{th}$ row of matrix $\mathbf{M^{'}_2}$ is $\bigg[\frac{\mathbf{\big(e_i C. diag(\delta t^{k-1})\big)}}{\mathbf{\big(e_i C.\delta t^{k-1}\big)}}\bigg]$ and sum of each row is one. Hence $\infty-$norm of matrix $\mathbf{M^{'}_2}$ is $\mathbf{|{M^{'}_2}|_\infty}=1$ using the property of norm \cite{matrix}
\begin{equation}\label{equ4}
|\mathbf{M^{'}_2.\delta t^{k-1}|_\infty} \leq |\mathbf{M^{'}_2|_\infty}.|\mathbf{\delta t^{k-1}|_\infty} =|\mathbf{\delta t^{k-1}|_\infty}
\end{equation}
Function $\phi_1$ is converging function toward center point. It is shown in Lemma \ref{lemma1} that in any step, if $\mathbf{t^k}$ is  very far from the center point  then it will tend toward the center point very rapidly. Hence for $\mathbf{\delta t^{k-1} > t}$
\[\phi_1(\mathbf{\delta t^{k-1}})<\mathbf{\delta t^{k-1}}\]
or
\begin{equation}\label{equ5}
|\mathbf{M^{'}_1.\delta t^{k-1}|_\infty} < |\mathbf{\delta t^{k-1}|_\infty}
\end{equation} 
Now taking $\infty$-norm on both side of equation \ref{equ3}  
\[|\mathbf{\delta t^{k}}|_\infty \leq |\frac{1}{1+\alpha}\mathbf{M^{'}_1.\delta t^{k-1}} + \frac{\alpha}{1+\alpha}\mathbf{M^{'}_2.\delta t^{k-1}}|_\infty\]

\[\leq \frac{1}{1+\alpha}|\mathbf{M^{'}_1.\delta t^{k-1}}|_\infty + \frac{\alpha}{1+\alpha}|\mathbf{M^{'}_2.\delta t^{k-1}}|_\infty\]
Using equation \ref{equ4} and \ref{equ5},
 
 \[|\mathbf{\delta t^{k}}|_\infty < \frac{1}{1+\alpha}|\mathbf{\delta t^{k-1}}|_\infty + \frac{\alpha}{1+\alpha}|\mathbf{\delta t^{k-1}}|_\infty=|\mathbf{\delta t^{k-1}}|_\infty\]
 Hence 
 \[|\mathbf{\delta t^k}|_\infty < |\mathbf{\delta t^{k-1}}|_\infty\]
 Therefore error in every step will decrease. In any step, it will decrease faster if $\mathbf{t^{k}}$ is very far from actual solution. Speed of convergence depends upon the  $\alpha$. For lower $\alpha$, impact of $\phi_2$ will be lower and $\phi_1$ will dominate the speed of convergence. Hence for smaller $\alpha$ speed of convergence will be high.

\section{Numerical Results}
In order to verify what has been discussed in earlier sections, we have taken the values of  the trust matrix $\mathbf{T}$ as  
$$\quad
\begin{bmatrix} 
0 & 5&6&6 \\
8&0&5&5\\
5&6&0&2\\
0&4&0&0
 
\end{bmatrix}
\quad.$$ 
The value of center point and the global trust in each iteration is calculated and shown in Table \ref{table1}, \ref{table2}, \ref{table3}, \ref{table4} and \ref{table5}.
\subsection{Convergence of the Center Point}
The convergence of center point of matrix $\mathbf{T^t}$ is shown in Table \ref{table1} and \ref{table2}. In Table \ref{table1} initial guess $\mathbf{t^0}=$[1 2 3 4]$^t$, it is close to the center point and in Table \ref{table2} initial guess $\mathbf{t^0}=$[100 300 200 100]$^t$, which is significantly far from center point. But we can see in both the cases that it will converge in seven iterations . In the latter case, error is very large in $0^{th}$ step and it becomes less then $\mathbf{t}$ with in one step.
\subsection{Convergence of the Global Trust}
Impact of initial guess and parameter $\alpha$ on the convergence of global trust is shown in Table \ref{table3}, \ref{table4} and \ref{table5}.  In Table \ref{table3} and \ref{table4} $\alpha$ is taken as 1/3 but  initial guess is different. Again we can see that for large initial guess of global trust it takes only one more iteration to converge to final value. It converge very fast when error($\mathbf{\delta t}$) is very large compare to global trust($\mathbf{t}$)
\par In Table \ref{table3} and \ref{table5} initial guess is taken same but $\alpha$ is different and we can see that for $\alpha=1/6$ it converges only in eight iterations.
\begin{table}
\begin{center}
\caption{Center point in each iteration, when initial guess is close to center point}\label{table1}
\begin{tabu} to 0.4\textwidth { | X[c] | X[c] | X[c] | X[c] | X[c]| X[c] | } 
 \hline
 $i$& 1& 2& 3& 4\\[1ex] 
 \hline
 $\mathbf{ t^0}$& 1.0000 &   2.0000 &   3.0000 &   4.0000\\
    \hline
$\mathbf{ t^1}$& 6.2000 &   4.8750 &   5.3333   & 3.6667\\
    \hline
 $\mathbf{ t^2}$&6.4327  &  5.1096  &  5.5598&    4.4027\\
    \hline
 $\mathbf{ t^3}$&6.4367   & 5.0706   & 5.5573 &   4.4008\\
    \hline
 $\mathbf{ t^4}$&6.4313    &5.0705    &5.5594  &  4.4002\\
    \hline
 $\mathbf{ t^5}$&6.4310    &5.0707    &5.5592   & 4.3994\\
    \hline
 $\mathbf{ t^6}$&6.4311   & 5.0708    &5.5591    &4.3994\\
    \hline
 $\mathbf{ t^7}$&6.4311   & 5.0708    &5.5591    &4.3994\\

  \hline
\end{tabu}
\end{center}
\end{table}

\begin{table}
\begin{center}
\caption{Center point in each iteration, when initial guess is very far from center point }\label{table2}
\begin{tabu} to 0.4\textwidth { | X[c] | X[c] | X[c] | X[c] | X[c]| X[c] | } 
 \hline
 $i$& 1& 2& 3& 4\\[1ex] 
 \hline
$\mathbf{ t^0}$& 100.0000&  300.0000&  200.0000 & 100.0000\\ 
 \hline
$\mathbf{ t^1}$&    6.8000&    5.2500&    5.2500 &   4.1667\\ 
 \hline
$\mathbf{ t^2}$&    6.5000 &   5.0668 &   5.5643  &  4.4827\\ 
 \hline
$\mathbf{ t^3}$&    6.4298  &  5.0654  &  5.5620   & 4.4050\\ 
 \hline
$\mathbf{ t^4}$&    6.4299   & 5.0706   & 5.5593    &4.3987\\ 
 \hline
$\mathbf{ t^5}$&    6.4310    &5.0708    &5.5591 &   4.3993\\ 
 \hline
$\mathbf{ t^6}$&    6.4311 &   5.0708  &  5.5591  &  4.3994\\ 
 \hline
$\mathbf{ t^7}$&    6.4311  &  5.0708   & 5.5591   & 4.3994\\

  \hline
\end{tabu}
\end{center}
\end{table}

\begin{table}
\begin{center}
\caption{Global trust in each iteration, when initial guess is close to global trust ($\alpha =1/3$) }\label{table3}
\begin{tabu} to 0.4\textwidth { | X[c] | X[c] | X[c] | X[c] | X[c]| X[c] | } 
 \hline
 $i$& 1& 2& 3& 4\\[1ex] 
 \hline
 
$\mathbf{ t^0}$&1.0000 &   2.0000 &   3.0000 &   4.0000\\ 
 \hline
$\mathbf{ t^1}$&4.9893 &   4.4051  &  3.9876  &  3.2749\\
 \hline
$\mathbf{ t^2}$&5.8801  &  4.8299   & 5.3148   & 4.4838\\ 
 \hline
$\mathbf{ t^3}$&6.0621&    5.1110&    5.5131&   4.6039\\ 
 \hline
$\mathbf{ t^4}$&6.1418 &   5.1543 &   5.5636 &   4.6490\\ 
 \hline
$\mathbf{ t^5}$&6.1550   & 5.1683   & 5.5802   & 4.6631\\
 \hline
$\mathbf{ t^6}$&6.1593  &  5.1717   & 5.5834   & 4.6657\\ 
 \hline
$\mathbf{ t^7}$& 6.1603  &  5.1725   & 5.5844   & 4.6665\\ 
 \hline
$\mathbf{ t^8}$& 6.1605   & 5.1727  &  5.5846    &4.6667\\ 
 \hline
$\mathbf{ t^9}$& 6.1606    &5.1728   & 5.5847   & 4.6667\\ 
 \hline
$\mathbf{ t^{10}}$& 6.1606 & 5.1728    &5.5847    &4.6667\\
  \hline
\end{tabu}
\end{center}
\end{table}

\begin{table}
\begin{center}
\caption{Global trust in each iteration, when initial guess is very far from global trust ($\alpha =1/3$)}\label{table4}
\begin{tabu} to 0.4\textwidth { | X[c] | X[c] | X[c] | X[c] | X[c]| X[c] | } 
 \hline
 $i$& 1& 2& 3& 4\\[1ex] 
 \hline
$\mathbf{ t^0}$&100.0000&  300.0000&  200.0000 & 100.0000\\ 
 \hline
$\mathbf{ t^1}$&16.9093  & 12.1379  & 13.7913   &11.3982\\ 
 \hline
$\mathbf{ t^2}$& 7.6469&    6.5685&    7.1369    &5.9630\\ 
 \hline
$\mathbf{ t^3}$&  6.5419 &   5.4824 &   5.9029    &4.9291\\ 
 \hline
$\mathbf{ t^4}$&  6.2499  &  5.2477  &  5.6685  &  4.7377\\ 
 \hline
$\mathbf{ t^5}$&  6.1829   & 5.1918   & 5.6048   & 4.6834\\ 
 \hline
$\mathbf{\ t^6}$&  6.1662    &5.1775    &5.5897    &4.6710\\ 
 \hline
$\mathbf{ t^7}$&  6.1620&    5.1740  &  5.5859   & 4.6678\\ 
 \hline
$\mathbf{ t^8}$&  6.1610 &   5.1731   & 5.5850    &4.6670\\ 
 \hline
$\mathbf{ t^9}$&  6.1607  &  5.1729    &5.5847   & 4.6668\\ 
 \hline
$\mathbf{ t^{10}}$& 6.1606 &  5.1728   & 5.5847  & 4.6667\\ 
 \hline
$\mathbf{ t^{11}}$&  6.1606 & 5.1728    &5.5847   &4.6667\\

  \hline
\end{tabu}
\end{center}
\end{table}

\begin{table}
\begin{center}
\caption{Global trust in each iteration for lesser value of $\alpha$( $\alpha =1/6$)}\label{table5}
\begin{tabu} to 0.4\textwidth { | X[c] | X[c] | X[c] | X[c] | X[c]| X[c] | } 
 \hline
 $i$& 1& 2& 3& 4\\[1ex] 
 \hline
$\mathbf{ t^0}$&1.0000&    2.0000 &   3.0000 &   4.0000\\ 
 \hline
$\mathbf{ t^1}$&    5.4761&    4.6006 &   4.5168&  3.4374\\ 
 \hline
$\mathbf{ t^2}$&    6.1901 &   5.0137  &  5.4742 & 4.5092\\ 
 \hline
$\mathbf{ t^3}$&    6.2506  &  5.1176   & 5.5682&  4.5424\\ 
 \hline
$\mathbf{ t^4}$&    6.2693   & 5.1288    &5.5767 & 4.5486\\ 
 \hline
$\mathbf{ t^5}$&    6.2714    &5.1304   & 5.5790  &4.5510\\ 
 \hline
$\mathbf{ t^6}$&    6.2716  &  5.1307    &5.5793&  4.5511\\ 
 \hline
$\mathbf{ t^7}$&    6.2717   & 5.1307   & 5.5793 & 4.5512\\ 
 \hline
$\mathbf{ t^8}$&    6.2717    &5.1307    &5.5793 & 4.5512\\

  \hline
\end{tabu}
\end{center}
\end{table}

\section{Conclusion}
In this letter, we analyzed the convergence of global trust given by equation \ref{eq1}. We have shown that in recursive calculation of global trust, error will decrease even if initial guess is very far from the actual solution. In any step, convergence is faster if error is larger. We have shown that speed of convergence depends upon the value of $\alpha$. For smaller $\alpha$ it will converge more faster.


%





\ifCLASSOPTIONcaptionsoff
  \newpage
\fi

\end{document}